\newcommand{\ints}{\mathbb{Z}}
\newcommand\intrng[2]{\mathbb{Z}_{[#1,#2]}}
\newcommand{\reals}{\mathbb{R}}
\newcommand{\mc}{\mathcal}
\newcommand{\la}[0]{\langle}
\newcommand{\ra}[0]{\rangle}
\newcommand{\mto}[0]{\rightrightarrows}
\newcommand{\argmin}{\mathop{\rm argmin}}
\newtheorem{ass}{Assumption}
\newtheorem{lmm}{Lemma}
\newtheorem{thm}{Theorem}
\newtheorem{rmk}{Remark}
\title{On Robustness in Optimization-Based \\Constrained Iterative Learning Control 
\thanks{The authors are with the ETH Zürich Automatic Control Laboratory, Physikstrasse 3, 8092 Zürich, Switzerland. A. Rupenyan is also with Inspire AG. Email:\texttt{\{dliaomc, ebalta, ralisa, jlygeros\}@ethz.ch}. This research is supported by the Swiss National Science Foundation through NCCR Automation (Grant Number 180545).}}
\author{Dominic Liao-McPherson, Efe C. Balta, Alisa Rupenyan and John Lygeros}
\begin{document}

\maketitle
\thispagestyle{empty}

\begin{abstract}
Iterative learning control (ILC) is a control strategy for repetitive tasks wherein information from previous runs is leveraged to improve future performance. Optimization-based ILC (OB-ILC) is a powerful design framework for constrained ILC where measurements from the process are integrated into an optimization algorithm to provide robustness against noise and modelling error. This paper proposes a robust ILC controller for constrained linear processes based on the forward-backward splitting algorithm. It demonstrates how structured uncertainty information can be leveraged to ensure constraint satisfaction and provides a rigorous stability analysis in the iteration domain by combining concepts from monotone operator theory and robust control. Numerical simulations of a precision motion stage support the theoretical results.
\end{abstract}

\begin{IEEEkeywords}
Optimization algorithms, Robust control, Iterative learning control, Manufacturing systems and automation
\end{IEEEkeywords}

\section{Introduction}

Iterative learning control (ILC) is a feedforward control technique for repetitive tasks that seeks to iteratively improve performance by learning from previous trials. The primary objectives in ILC are rejection of repetitive disturbances and compensation for plant-model mismatch. ILC is performant, often converging adequately within just a few iterations, robust, and straightforward to implement \cite{bristow2006survey}. As such, it has been successfully applied in a variety of areas, for example in robotics, manufacturing, and chemical processing~\cite{ahn2007iterative}. 

There is an extensive body of literature on ILC of unconstrained systems. However, all real systems are subject to constraints e.g., actuator saturations, and ILC algorithms may fail if constraints are ignored \cite{chen1999learning}. ILC was originally formulated as an unconstrained tracking problem \cite{uchiyama1978formation,arimoto1984bettering}, which poses challenges in handling constraints in a systematic manner. Optimization-based ILC (OB-ILC) provides a powerful and complimentary framework for the design and analysis of ILC policies \cite{gunnarsson2001design, owens2005iterative,mishra2010optimization,chen2021iterative,owens2016ilcbook}, especially when constraints are present. 

In typical OB-ILC, the ILC learning policy is defined via the solution of a data dependent optimization problem and ILC is implemented by alternating between obtaining data from the system and updating the input using the policy. The designer's task is to construct this problem so that the policy has desirable transient behaviour e.g., monotonicity of the tracking error in a specific norm~\cite{barton2010norm,chu2010iterative,amann1996iterative}, and its fixed point coincides with the control objectives. A more systematic approach is to encode the control objectives in an optimization problem involving the unknown process and construct an ILC policy by incorporating data from the process into a suitable algorithm, similar to online/feedback optimization methods \cite{hauswirth2021optimization,simonetto2020time}.

An interior point based OB-ILC algorithm is presented in \cite{mishra2010optimization}. Methods based on the successive projection algorithm are presented in \cite{chu2010iterative} and \cite{chen2019generalized} for the input and input/output constrained cases respectively. Sequential quadratic programming and norm-optimal type algorithms are proposed in \cite{schollig2009optimization} and \cite{amann1996iterative}, respectively. 
However, the constrained convergence analyses provided in these works only cover the nominal case without process noise or model mismatch.

Sufficient conditions for robustness of unconstrained gradient descent based and norm-optimal ILC are derived in \cite{owens2009robust} and \cite{barton2010norm} respectively. Sufficient conditions for stability under structured uncertainty are derived in \cite{moon1998robust,van2009iterative} for the unconstrained case and in \cite{son2015robust} for the input constrained case using robust control techniques. More recently, \cite{chen2021iterative} provides results for constrained robust convergence under the assumption of inactive constraints and without non-repetitive noise. In \cite[Chapter 12.3]{owens2016ilcbook}, constrained convergence results are presented for two cases where: 1) the constraints are met exactly, but system measurements are not leveraged, leading to a non-robust controller, 2) system measurements are utilized in the ILC update, but the constraints are only asymptotically met. A constrained nonlinear programming approach with $0$-th order corrections is presented in \cite{baumgartner2020zero} and includes convergence results with approximate sensitivities, providing difficult-to-verify sufficient conditions and without noise. In a separate non-ILC context, input constrained feedback optimization of power systems 
is presented in \cite{colombino2019towards}, but output constraints, nor noise are considered.Therefore, there is a gap in the existing literature for treating input/output constrained robust convergence with constraint satisfaction guarantees in the presence of active constraints, non-repeating process noise, and modeling error.

In this paper, we investigate OB-ILC of noisy and constrained linear processes using forward-backward splitting (FBS) methods. We combine ideas from robust constrained control and monotone operator theory to study the constraint satisfaction and convergence properties of our proposed policy. Our contributions are threefold:
\begin{enumerate}[i)]
\item We present a methodology for handling input and output constraints in the presence of noise and model mismatch.
\item We illustrate how a process model can be interpreted as a preconditioner and explicitly analyze the effect of modelling error on the convergence rate of the ILC policy.
\item We derive computable sufficient conditions for input-to-state stability of the closed-loop in the iteration domain.
\end{enumerate}

\textit{Notation:} For $x,y\in \reals^n$, $A\in \reals^{n\times m}$, and a positive definite matrix $P = P^T \succ 0$ the weighted inner product is $\la x,y\ra_P = x^T P y$, $\|x\|_P^2 = \la x,x\ra_P$ and $\|A\|_P$ is the induced matrix norm. The convex hull is denoted by $\mathrm{co}$. The normal cone mapping for a closed convex set $\mc{C}\subseteq \reals^n$ is $\mc{N}_\mc{C}(\xi) = 
  \{w~|~\langle \xi - u,w\rangle \geq 0,~\forall u \in \mc C\}$ if $ \xi\in \mc C$ and empty otherwise. For $\mc{U}\subseteq \reals^n$ and $\mc{V}\subseteq \reals^n$ the Pontryagin set difference is $\mc{U} \ominus \mc{V} =\{x\in \mc{U}~|~x+y \in \mc{U}~\forall y\in \mc{V}\}$. The radius of a set $\mc{D}$ containing $0$ is $\mathrm{rad}~\mc{D} = \min~r$ such that $\mc{D} \subseteq \{x~|~\|x\|_\infty\leq r\}$.

\section{Problem Setting}
Consider a repetitive process governed by the mapping
\begin{equation} \label{eq:mapping}
y = Gu + w + d
\end{equation}
where $u \in \reals^n$ is the input, $y\in \reals^m$ is the output, $G\in \reals^{m\times n}$ and $w\in \reals^m$ are parameters, and $d$ is a disturbance drawn from a known compact set $\mc{D} \subset \reals^m$. The parameters $\theta = [G~w]$ are iteration invariant while $d$ varies with iteration. 

We wish to enforce the input and output constraints
\begin{equation} \label{eq:constraints}
	u\in \mc{U} \text{ and } y\in \mc{Y},
\end{equation}
where $\mc{U}\subset \reals^n$ and $\mc{Y}\subseteq \reals^m$ are closed, convex, and non-empty sets. A wide variety of practical constraints can be encoded using \eqref{eq:constraints}, such as actuator saturations, actuator rate limits, or tooltip velocity limits, see e.g., \cite{mishra2010optimization} or \cite{chu2010iterative}. 

The mapping parameters $\theta = [G~w]$ are unknown, but the output $y$ corresponding to an input $u$ can be measured by \textit{running an experiment}. We assume that the true mapping lies within a known uncertainty set $\Theta \subset \reals^{m \times (n+1)}$. In this paper, we employ the following convex hull representation\footnote{This kind of uncertainty set is commonly generated by e.g., set-membership identification or interval estimation \cite{milanese2013bounding}.} of $\Theta$.
\begin{ass}\label{ass:uncertainty}
The parameters $\theta = [G~w]$ of the repetitive process \eqref{eq:mapping} satisfy $\theta \in \Theta = \textrm{co}~\{\theta_1,\ldots,\theta_N\}$ where $\theta_i = [G_i~~w_i]~~i \in \intrng{1}{N}$ are known.
\end{ass}

Our objective is to drive the output of \eqref{eq:mapping}, to a reference output $r\in\reals^m$ while satisfying \eqref{eq:constraints}. The ILC approach to this problem is to construct an iterative process of the form
\begin{subequations} \label{eq:closed-loop}
\begin{align} \
  y_k &= Gu_k + w + d_k, \label{eq:mapping_a}\\
  u_{k+1} &= \mc{T}(u_k,y_k), \label{eq:update_law}
\end{align}
\end{subequations}
and design the ILC policy $\mc{T}: \reals^n \times \reals^m \to \mc{U}$ so that $ y_k \approx r$ as $k\to \infty$. 

In the ILC literature \eqref{eq:mapping} is known as the lifted system representation and can encode a variety of processes. For example, it can represent finite trajectories of the linear system
\begin{subequations} \label{eq:LTV_system}
\begin{align}
x(i+1) &= A(i) x(i) + B(i) u(i),~~x(0) = x_0\\
y(i) &= C(i) x(i) + c(i)
\end{align}
\end{subequations}
as seen in \cite{barton2010norm,amann1996iterative}, additive manufacturing processes governed by partial differential equations \cite{hoelzle2015spatial}, or transfer function based representations \cite{mishra2010optimization}. In this example, the term $Gu$ in \eqref{eq:mapping} is the forced response of \eqref{eq:LTV_system}, $w$ is the response to the initial condition $x_0$, and $d$ represents the measurement noise $c$.

\section{Optimization Based ILC Policy Design}
This paper adopts an optimization-based framework for ILC. If $d$ is absent and $G$ and $w$ are known then one can address our problem by solving
\begin{subequations} \label{eq:opt}
\begin{align} 
&\underset{y\in \mc{Y},u\in \mc{U}}{\mathrm{min}} \quad \frac12||y - r||_Q^2 + \frac12\|u\|_R^2 \\
&~~~\mathrm{s.t.}\quad \quad y = Gu + w,
\end{align}
\end{subequations}
where $Q = Q^T \succeq 0$ and $R = R^T \succeq 0$ are weighting matrices. This formulation can be used to encode a variety of meaningful objectives as illustrated by the example in Section~\ref{ss:simulations}.

The optimization problem \eqref{eq:opt} can be written compactly as
\begin{equation} \label{eq:opt_compact}
\underset{u\in \mc{Z}}{\mathrm{min}} ~~ \phi(u) = \frac12 u^T H u + f^T u 
\end{equation}
where $H = G^T Q G + R$, $f = G^T Q(w-r)$ and $\mc{Z} = \{u\in \mc{U}~|~Gu+w \in \mc{Y}\}$. Under convexity assumptions on $\phi$ and $\mc{Z}$, the following Variational Inequality (VI) is necessary and sufficient for optimality of \eqref{eq:opt_compact}
\begin{equation} \label{eq:ideal_vi}
 \nabla \phi(u) + \mc{N}_{\mc{Z}}(u)\ni 0
\end{equation}
where $\mc{N}_{\mc{Z}} :\mc{Z} \mto \reals^{n}$ is the normal cone mapping of $\mc{Z}$. This VI can be readily solved using a variety of methods, we focus on the forward-backwards splitting (FBS) algorithm
\begin{gather}\label{eq:FBsplitting}
u_{k+1} = \Pi_\mc{Z}^W(u_k - \alpha W^{-1} \nabla \phi(u_k)),
\end{gather}
where $W = W^T \succ 0$ is a \textit{preconditioner}, $\alpha > 0$ is a step-size,  and
\begin{equation} \label{eq:proj_def}
	\Pi_\mc{Z}^W(x) = (I + \alpha W^{-1} \mc{N}_{\mc{Z}})^{-1}(x) = \argmin_{v\in \mc{Z}} \|v-x\|_W^2
\end{equation}
is the $W$-weighted projection. Given exact knowledge of $G$ and $w$ the FBS converges to the unique minimizer of \eqref{eq:opt_compact} for appropriately chosen $W$ and $\alpha$, see e.g., \cite{bauschke2011convex}. Unfortunately, the FBS algorithm \eqref{eq:FBsplitting} is not directly implementable as it requires perfect knowledge of $G$ and $w$. 

Next, we illustrate how to modify the FBS to design ILC policies for constrained systems. The idea is to integrate measurements from the system into an optimization algorithm to provide robustness against disturbances and modelling errors. Since the true $G$ and $w$ are not available, we make the following modifications to the FBS algorithm \eqref{eq:FBsplitting} to make it suitable for online use:
\begin{enumerate}[i)]
\item We select an approximate model $M\approx G$ satisfying $M \in \mathrm{co}~\{G_1,\ldots,G_N\}$ and replace $G^T$ with $M^T$ in the gradient $\nabla \phi(u) = G^TQ(Gu + w -r) + Ru$
\item We incorporate the physical process into the algorithm by replacing the term $Gu + w$ in $\nabla \phi(u)$ with measurements $y$ from the system;
\item We replace $\mc{Z}$ with the tightened set
\begin{equation*} 
	\mc{X} = \{u\in \mc{U}~|~Gu + w + d \in \mc{Y}~~\forall [G~w]\in \Theta, d\in \mc{D} \},
\end{equation*}
to ensure constraint satisfaction;
\item We precondition the algorithm using the model by setting $W = M^T Q M + R$.
\end{enumerate}
These modifications to the forward-backwards algorithm result in the following optimization-based ILC policy
\begin{equation} \label{eq:ILC_policy}
u_{k+1} = \mc{T}(u_k,y_k) :=\Pi_\mc{X}^W (u_k - \alpha W^{-1}\bar F(u_k,y_k)),
\end{equation}
where $\bar F(y,u) = M^TQ(y - r) + Ru$. Under this preconditioner selection and in the absence of constraints, the algorithm (\ref{eq:ILC_policy}) is identical to the norm-optimal ILC algorithms in the literature, e.g., \cite{barton2010norm}, with the addition of the step-size parameter $\alpha$ and without an input variation penalty in the cost function.

The following assumptions ensure that the optimization problem \eqref{eq:opt} is well-posed and that the policy \eqref{eq:ILC_policy} is well-defined and implementable.
\begin{ass} \label{ass:mega-assumption}
The following hold: (i) $H = G^TQG + R \succ 0$; (ii) $\mc{X}$ is closed, convex, and non-empty; (iii) $\mc{D}$ is known, convex and compact; (iv) $W = M^TQM + R \succ 0$.
\end{ass}
The condition (i) is uncheckable (it requires knowledge of the true system) but can be ensured by setting $R \succ 0$. To check (ii), note that due to Assumption~\ref{ass:uncertainty} $\mc{X}$ reduces to 
\begin{equation} \label{eq:Xdef}
	\mc{X} = \{u\in \mc{U}~|~G_i u + w_i \in (\mc{Y} \ominus \mc{D}),~~i\in \intrng{1}{N}\}
\end{equation}
which is closed and convex if $\mc{U}$ and $\mc{Y}$ are closed and convex. Non-emptiness can then be checked via a feasibility problem\footnote{The Pontryagin-difference of convex sets can be computed using well established tools e.g., the MPT3 toolbox \cite{herceg2013multi}.}. The remaining points involve only known problem data.

The ILC policy \eqref{eq:ILC_policy} can re-expressed as a convex optimization problem and efficiently implemented using standard software. Using the projection equations \eqref{eq:proj_def}, the ILC policy satisfies the following inclusions
\begin{subequations}
\begin{gather}
	u_{k+1} + \alpha W^{-1}\mc{N}_\mc{X}(u_{k+1}) \ni u_k  - \alpha W^{-1} \bar F(u_k,y_k),\\
	\Leftrightarrow W(u_{k+1} - u_k) + \alpha \bar F(u_k,y_k) + \mc{N}_\mc{X}(u_{k+1})\ni 0. \label{eq:vi_weighted_proj}
\end{gather}
\end{subequations}
Let $u_{k+1} = \mc{T}(u_k,y_k)$ as in \eqref{eq:ILC_policy}, then the inclusion \eqref{eq:vi_weighted_proj} is necessary and sufficient for optimality of the following convex program with parameters $u = u_k$ and $y = y_k$
\begin{subequations} \label{eq:backwards_opt}
\begin{align}
\mc{T}(u,y) = &\argmin_{v\in \mc U}\frac12 \|v - u\|_W^2 + \alpha v^T \bar F(u,y)\\
&~~~\mathrm{s.t.} \quad G_i v + w_i \in (\mc{Y} \ominus \mc{D}),~ i \in \intrng{1}{N}.
\end{align}
\end{subequations}
Algorithm~\ref{algo:FBS-ILC} summarizes the implementation of the proposed ILC policy with a cost function based stopping criterion.

 \begin{algorithm}[H]
\caption{FBS based ILC}
\label{algo:FBS-ILC}
\begin{algorithmic}[1]
\Require $u_0$, $\epsilon> 0$, $k_{max} > 0$
\State $u_0 \gets \Pi_\mc{X}^W(u_0)$, $k \gets 0$,
\Repeat
  \State Apply $u_k$ to the process \eqref{eq:mapping} and measure $y_k$.
  \State  Solve the problem \eqref{eq:backwards_opt} to compute $u_{k+1} = \mc{T}(u_k,y_k)$.
  \State $k\gets k+1$
\Until{$\|u_{k+1} - u_k\|} \leq \epsilon$ or $k > k_{max}$
\end{algorithmic}
\end{algorithm}

Implementing \eqref{eq:backwards_opt} requires solving potentially large scale convex optimization problems. If $G$ arises from a linear dynamical system or is derived from a spatial process as in \cite{hoelzle2015spatial} then efficient solvers for model predictive control problems or Fourier transform based methods  can be leveraged.

\section{Analysis}
This section investigates the properties of the closed-loop system \eqref{eq:closed-loop}. Because of the disturbances, we mix monotone operator theory with the system theoretic input-to-state stability framework\cite{jiang2001input}. To begin the analysis, we substitute \eqref{eq:mapping_a} into \eqref{eq:update_law} leading to the following noisy forward-backward splitting algorithm
\begin{equation} \label{eq:analysis_mapping}
	u_{k+1} = \Pi^W_\mc{X}[u_k - \alpha W^{-1}(\tilde F(u_k) + M^TQ d_k)]
\end{equation}
where $\tilde F(u) = \tilde H u + \tilde f$, $\tilde H = R+ M^T Q G$, and $\tilde f = M^TQ(w-r)$. We impose the following conditions on the mapping $\tilde F$ and the step length $\alpha$.
\begin{ass} \label{ass:convergence-assumptions}
Define $\mu = \min_{i\in \intrng{1}{N}}\mu_i$, $L = \max_{i\in \intrng{1}{N}}L_i$ where $L_i$ and $\mu_i$ are the largest and smallest Eigenvalues of the matrix $\Xi_i = 0.5 W^{-\frac12}(H_i + H_i^T)W^{-\frac12}$, respectively and $H_i = M^TQG_i + R$. Then the following hold: (i) $\mu > 0$; (ii) $\alpha\in (0,2\mu/L^2)$.
\end{ass}
If the original problem \eqref{eq:opt} is strongly convex then $\mu> 0$ is guaranteed to hold if the uncertainty set $\Theta$ is sufficiently small. As such, Assumption~\ref{ass:convergence-assumptions}.(i) states that we cannot expect convergence for an arbitrarily bad model drawn from an arbitrarily large uncertainty set. Since $\Theta$ is known, $\mu$ and $L$ can be computed numerically using an Eigenvalue solver and $\alpha$ can be chosen such that Assumption~\ref{ass:convergence-assumptions}.(ii) holds.

We first show that Assumptions~\ref{ass:uncertainty}-\ref{ass:convergence-assumptions} are sufficient for constraint satisfaction and robust stability of \eqref{eq:analysis_mapping} about 
\begin{equation} \label{eq:analysis_VI}
	\bar u = \{u~|~\tilde F( u) + \mc{N}_\mc{X}( u)\ni 0\},
\end{equation}
before proceeding to derive bounds on the sequence of cost function values $\phi(u_k)$ generated by the algorithm.

We begin by showing that enforcing the tightened constraint set ensures state constraint satisfaction.
\begin{lmm} \label{lmm:cstr_tightening}
Given Assumptions~\ref{ass:uncertainty}-\ref{ass:convergence-assumptions} we have that
\begin{equation*}
	u \in \mc{X} \implies u\in \mc{U} \text{ and } Gu + w + d \in \mc{Y}~~\forall d\in \mc{D}.
\end{equation*}
\end{lmm}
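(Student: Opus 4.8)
The plan is to exploit two structural facts: that the true parameter $[G~w]$ lies in the convex hull of the known vertices $\theta_i$, and that the output set $\mc{Y}$ is convex. First I would use the reduced description \eqref{eq:Xdef} of the tightened set. Membership $u\in\mc{X}$ immediately gives $u\in\mc{U}$, which discharges the first conclusion, and it also gives $G_i u + w_i \in (\mc{Y}\ominus\mc{D})$ for every $i\in\intrng{1}{N}$. By the definition of the Pontryagin difference, each such inclusion is equivalent to the statement that $G_i u + w_i + d \in \mc{Y}$ for \emph{all} $d\in\mc{D}$ and all $i$. So after this first step every vertex prediction, perturbed by any admissible disturbance, is certified to lie in $\mc{Y}$.

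Next I would bring in Assumption~\ref{ass:uncertainty} to transfer this vertex-wise guarantee to the unknown true plant. Writing the true parameter as a convex combination $[G~w] = \sum_{i=1}^N \lambda_i [G_i~w_i]$ with $\lambda_i\geq 0$ and $\sum_{i=1}^N \lambda_i = 1$, we have $G = \sum_i \lambda_i G_i$ and $w = \sum_i \lambda_i w_i$. Fixing an arbitrary $d\in\mc{D}$ and using $\sum_i \lambda_i = 1$ to redistribute the additive disturbance term across the combination yields the identity $Gu + w + d = \sum_{i=1}^N \lambda_i\,(G_i u + w_i + d)$.

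The final step is a direct appeal to convexity. Each summand $G_i u + w_i + d$ was shown to lie in $\mc{Y}$, and since $\mc{Y}$ is convex and the $\lambda_i$ form a valid convex weighting, the convex combination $\sum_i \lambda_i (G_i u + w_i + d)$ lies in $\mc{Y}$ as well. Because $d\in\mc{D}$ was arbitrary, this establishes $Gu + w + d \in \mc{Y}$ for all $d\in\mc{D}$, completing the argument.

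I do not anticipate a genuine obstacle here: the result is essentially a bookkeeping consequence of the convex-hull uncertainty model, convexity of $\mc{Y}$ (and compactness/convexity of $\mc{D}$ from Assumption~\ref{ass:mega-assumption}), and the definition of the Pontryagin difference, so Assumption~\ref{ass:convergence-assumptions} plays no role. The one point requiring mild care is the redistribution of the disturbance $d$ through the convex combination, which relies crucially on the weights summing to one; this is precisely what allows a single robust tightening enforced only at the $N$ known vertices to certify robustness for the entire continuum of plants in $\Theta$ and all disturbances in $\mc{D}$ simultaneously.
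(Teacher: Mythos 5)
Your proof is correct and takes essentially the same route as the paper's: both reduce $\mc{X}$ to the vertex description \eqref{eq:Xdef}, write $[G~w]$ as a convex combination of the $\theta_i$ via Assumption~\ref{ass:uncertainty}, and conclude by convexity plus the definition of the Pontryagin difference. The only immaterial difference is the order of operations --- you unfold the Pontryagin difference at each vertex first and then apply convexity of $\mc{Y}$ with $d$ distributed through the combination, while the paper takes the convex combination inside $\mc{Y}\ominus\mc{D}$ (using convexity of the tightened set) and unfolds the difference once at the end; your correct side observation that Assumption~\ref{ass:convergence-assumptions} is never used also matches the paper's proof.
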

\begin{proof}
Let $y = Gu + w$ and note that that $u \in \mc{X} \implies y_i = G_i u + w_i \in (\mc{Y} \ominus \mc{D})$ and $u\in \mc{U}$. Moreover, Assumption~\ref{ass:uncertainty} implies the existence of $\lambda_i \geq 0$ such that $\sum_{i=1}^N \lambda_i = 1$ and $[G~w] = \sum_{i=1}^N \lambda_i [G_i~w_i]$ which in turn implies that $y = \sum_{i=1}^N \lambda_i y_i \in (\mc{Y} \ominus \mc{D})$ by convexity of $(\mc{Y} \ominus \mc{D})$. Then by the definition of the Pontryagin set difference, $y \in \mc{Y}\ominus \mc{D} \implies y + d \in \mc{Y}$ for all $d\in \mc{D}$
\end{proof}

Next we focus on convergence and stability. The following preparatory lemma provides estimates of the monotonicity and Lipschitz constants of the mapping $W^{-1} \tilde F$ in \eqref{eq:analysis_mapping}.
\begin{lmm} \label{lmm:mu_and_L}
Given Assumption~\ref{ass:uncertainty}, if for all $i\in \intrng{1}{N}$ 
\begin{equation*}
	\mu_i \|x-y\|_W^2 \leq \la W^{-1} H_i (x- y), x-y\ra_W \leq L_i \|x-y\|_W^2,
\end{equation*}
where $H_i = R+ M^T Q G_i$, then 
\begin{equation*}
	\mu \|x-y\|_W^2 \leq \la W^{-1}\tilde F(x) - W^{-1}\tilde F(y),x-y\ra_W \leq L \|x-y\|_W^2.
\end{equation*}
\end{lmm}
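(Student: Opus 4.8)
The plan is to exploit the affine structure of $\tilde F$ together with the convex-hull representation of the true parameters furnished by Assumption~\ref{ass:uncertainty}. Since $\tilde F(u) = \tilde H u + \tilde f$ is affine, the increment telescopes to $W^{-1}\tilde F(x) - W^{-1}\tilde F(y) = W^{-1}\tilde H(x-y)$, so with $z = x - y$ the quantity to be bounded is simply $\la W^{-1}\tilde H z, z\ra_W$. Using the definition $\la a,b\ra_W = a^T W b$ and $W = W^T$, I would first record the identity $\la W^{-1}\tilde H z, z\ra_W = z^T\tilde H z$, which strips the preconditioner out of the inner product and reduces everything to a plain quadratic form in $\tilde H$.

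The key structural observation is that $\tilde H$ is itself a convex combination of the vertex matrices $H_i$. By Assumption~\ref{ass:uncertainty} there exist $\lambda_i \geq 0$ with $\sum_{i=1}^N \lambda_i = 1$ and $[G~w] = \sum_{i=1}^N \lambda_i [G_i~w_i]$, so in particular $G = \sum_i \lambda_i G_i$. Because the weights sum to one, the penalty term $R$ can be distributed over them, yielding
\begin{equation*}
\tilde H = R + M^TQG = \sum_{i=1}^N \lambda_i\,(R + M^TQG_i) = \sum_{i=1}^N \lambda_i H_i.
\end{equation*}
Substituting this into the quadratic form and using linearity of the inner product then gives $\la W^{-1}\tilde H z, z\ra_W = \sum_i \lambda_i \la W^{-1} H_i z, z\ra_W$.

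With this decomposition in hand, I would apply the per-vertex hypotheses $\mu_i\|z\|_W^2 \le \la W^{-1}H_i z, z\ra_W \le L_i\|z\|_W^2$ term by term, then bound $\mu_i \ge \mu = \min_j \mu_j$ and $L_i \le L = \max_j L_j$ under the nonnegative weights. Since $\sum_i \lambda_i = 1$, the weighted averages collapse to the uniform constants, $\sum_i \lambda_i \mu_i \ge \mu$ and $\sum_i \lambda_i L_i \le L$, delivering the claimed two-sided bound. I do not anticipate a genuine obstacle; the argument is an averaging estimate over the polytope vertices. The one point needing care is the folding of $R$ into the convex combination above, which relies essentially on the normalization $\sum_i \lambda_i = 1$; without it $\tilde H$ would not lie in $\mathrm{co}\{H_1,\ldots,H_N\}$ and the per-vertex bounds could not be averaged.
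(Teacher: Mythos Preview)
Your proposal is correct and follows essentially the same approach as the paper: reduce the inner product to the quadratic form $\la \tilde H(x-y),x-y\ra$ via the affine structure of $\tilde F$, express $\tilde H$ as a convex combination of the vertex matrices $H_i$ using Assumption~\ref{ass:uncertainty}, and then average the per-vertex bounds using $\mu_i \ge \mu$, $L_i \le L$ and $\sum_i \lambda_i = 1$. Your treatment is in fact slightly more explicit than the paper's, spelling out why $\tilde H \in \mathrm{co}\{H_1,\ldots,H_N\}$ by distributing $R$ through the convex combination.
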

\begin{proof}
Under Assumption~\ref{ass:uncertainty}, we have that $\tilde H\in \mathrm{co}~\{H_i,~i\in \intrng{1}{N}\}$. Moreover, $\tilde F$ is affine and thus
\begin{equation}
	\la W^{-1}\tilde F(x) - W^{-1}\tilde F(y),x-y\ra_W = \la \tilde H(x-y),x-y \ra.
\end{equation}
Starting with the lower bound:
\begin{align*}
& \la \tilde H(x-y),x-y \ra  = \textstyle \sum_{i=1}^N\lambda_i \left \la W^{-1}H_i(x-y),x-y\right \ra_W\\ 
&\geq \textstyle \sum_{i=1}^N\lambda_i \mu_i \|x-y\|_W^2 \\
&\geq \textstyle \sum_{i=1}^N\lambda_i \mu \|x-y\|_W^2 = \mu\|x-y\|_W^2
\end{align*}
The proof of the upper bound is analogous. 
\end{proof}

Next we show that the closed-loop \eqref{eq:closed-loop} satisfies constraints, is input-to-state stable, and characterizes its convergence rate. The convergence rate depends
on the smallest and largest Eigenvalues $\tilde \mu > 0$ and $\tilde L > 0$ of the matrix
\begin{equation}
	\tilde \Xi =0.5 W^{-\frac12}  (\tilde H+\tilde H^T) W^{-\frac12}, \quad \tilde H = M^T Q G + R.
\end{equation}

\begin{thm} \label{eq:thm_convergence}
Let Assumptions~\ref{ass:uncertainty}-\ref{ass:convergence-assumptions} hold. Then $u_k \in \mc{U}$ and $y_k \in \mc{Y}$ for all $k\geq 0$ and the closed-loop \eqref{eq:closed-loop} satisfies
\begin{equation*}
	\|u_{k} - \bar u\|_W \leq \eta^k \|u_0 - \bar u\|_W + \gamma \textstyle \sup_{k\geq0}\|d_k\|
\end{equation*}
where $\eta^2 = 1- (\tilde \mu/\tilde L)^2 (1-\epsilon^2)$, $\epsilon = 1 - \alpha\tilde L^2/\tilde \mu$, $\bar u$ is defined in \eqref{eq:analysis_VI}, and $\gamma = \|W^\frac12 M^T Q\|/(1-\eta)$. Moreover, $\epsilon \in (-1,1)$ and  $\eta \in (0,1)$. 
\end{thm}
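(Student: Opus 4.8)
The plan is to split the claim into three parts: feasibility of the iterates, the input-to-state stability (ISS) estimate, and the algebraic facts $\epsilon\in(-1,1)$ and $\eta\in(0,1)$. Feasibility I would argue directly from the structure of \eqref{eq:analysis_mapping}: the initialization in Algorithm~\ref{algo:FBS-ILC} gives $u_0=\Pi^W_\mc{X}(u_0)\in\mc{X}$, and every subsequent iterate is an image of $\Pi^W_\mc{X}$, so $u_k\in\mc{X}$ for all $k\ge 0$. Lemma~\ref{lmm:cstr_tightening} then yields $u_k\in\mc{U}$ and $Gu_k+w+d\in\mc{Y}$ for every $d\in\mc{D}$; since $d_k\in\mc{D}$, this gives $y_k=Gu_k+w+d_k\in\mc{Y}$, which settles the constraint part.

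For the ISS estimate I would first note that Assumption~\ref{ass:convergence-assumptions}(i) together with Lemma~\ref{lmm:mu_and_L} makes $\tilde F$ strongly monotone, so the VI \eqref{eq:analysis_VI} has a unique solution $\bar u$, and the optimality/fixed-point equivalence underlying \eqref{eq:vi_weighted_proj} gives $\bar u=\Pi^W_\mc{X}[\bar u-\alpha W^{-1}\tilde F(\bar u)]$. Subtracting this from \eqref{eq:analysis_mapping} and using that $\Pi^W_\mc{X}$ is nonexpansive in the $W$-norm, I would bound
\[
\|u_{k+1}-\bar u\|_W \le \big\|(u_k-\bar u)-\alpha W^{-1}(\tilde F(u_k)-\tilde F(\bar u))\big\|_W + \alpha\|W^{-1}M^TQd_k\|_W .
\]
Because $\tilde F$ is affine with $\tilde F(x)-\tilde F(y)=\tilde H(x-y)$, the first term is the action of $I-\alpha W^{-1}\tilde H$ on $u_k-\bar u$ and the second is a disturbance gain times $\|d_k\|$.

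The heart of the proof is showing the first term contracts by $\eta$. Writing $\delta=u_k-\bar u$ and expanding in the $W$-inner product,
\[
\|\delta-\alpha W^{-1}\tilde H\delta\|_W^2 = \|\delta\|_W^2 - 2\alpha\la W^{-1}\tilde H\delta,\delta\ra_W + \alpha^2\|W^{-1}\tilde H\delta\|_W^2 ,
\]
where Lemma~\ref{lmm:mu_and_L} specialized to the true plant controls the cross term from below by $\tilde\mu\|\delta\|_W^2$. The delicate point — and the step I expect to be the main obstacle — is the quadratic term: I need $\|W^{-1}\tilde H\delta\|_W\le \tilde L\|\delta\|_W$. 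Setting $B=W^{-1/2}\tilde H W^{-1/2}$, this is the statement $\sigma_{\max}(B)\le\lambda_{\max}\!\big(\tfrac12(B+B^T)\big)=\tilde L$, which is only automatic when $B$ is symmetric; the cleanest resolution is to read $\tilde L$ as the operator Lipschitz constant of $W^{-1}\tilde H$, which coincides with $\lambda_{\max}(\tilde\Xi)$ precisely in the symmetric/nominal case $M=G$ (where $W=\tilde H=H$), and to flag that the nonsymmetric case requires this interpretation. Granting it, the right-hand side becomes $(1-2\alpha\tilde\mu+\alpha^2\tilde L^2)\|\delta\|_W^2$, and a short algebraic check confirms $1-2\alpha\tilde\mu+\alpha^2\tilde L^2=1-(\tilde\mu/\tilde L)^2(1-\epsilon^2)=\eta^2$. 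Bounding the disturbance term by $\alpha\|W^{-1/2}M^TQ\|\,\|d_k\|$ yields the one-step recursion $\|u_{k+1}-\bar u\|_W\le \eta\|u_k-\bar u\|_W + c\sup_k\|d_k\|$, and summing the geometric series produces the stated bound with $\gamma=c/(1-\eta)$.

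Finally, for the sign and magnitude conditions I would exploit the ordering $\mu\le\tilde\mu\le\tilde L\le L$, which holds because $\tilde\Xi=\sum_i\lambda_i\Xi_i$ is a convex combination of the vertex matrices, so its extreme eigenvalues are trapped between $\mu=\min_i\mu_i$ and $L=\max_i L_i$. Then $\epsilon=1-\alpha\tilde L^2/\tilde\mu<1$ is immediate from $\alpha>0$, while $\epsilon>-1$ and $\eta^2<1$ both reduce to $\alpha<2\tilde\mu/\tilde L^2$, which is implied by Assumption~\ref{ass:convergence-assumptions}(ii) since $\alpha<2\mu/L^2\le 2\tilde\mu/\tilde L^2$. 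Positivity $\eta>0$ follows by viewing $\eta^2=1-2\alpha\tilde\mu+\alpha^2\tilde L^2$ as a quadratic in $\alpha$ whose minimum value $1-(\tilde\mu/\tilde L)^2$ is nonnegative, strictly positive whenever $\tilde\mu<\tilde L$.
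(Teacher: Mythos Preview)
Your proof follows essentially the same route as the paper: constraint satisfaction via $u_k\in\mc{X}$ and Lemma~\ref{lmm:cstr_tightening}; the fixed-point identity $\bar u=\Pi^W_\mc{X}[\bar u-\alpha W^{-1}\tilde F(\bar u)]$; nonexpansiveness of $\Pi^W_\mc{X}$ in $\|\cdot\|_W$; contraction of the forward step with rate $\eta^2=1-2\alpha\tilde\mu+\alpha^2\tilde L^2$; the convex-combination argument $[\tilde\mu,\tilde L]\subseteq[\mu,L]$ to transfer Assumption~\ref{ass:convergence-assumptions}(ii) to $\alpha<2\tilde\mu/\tilde L^2$; and the geometric series for the disturbance. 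The paper simply packages the contraction step into a citation of \cite[Prop.~25.9]{bauschke2011convex}, whereas you expand the square explicitly.

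Two remarks. First, the subtlety you flag --- that the bound $\|W^{-1}\tilde H\delta\|_W\le\tilde L\|\delta\|_W$ requires $\tilde L$ to be the operator norm of $W^{-1/2}\tilde H W^{-1/2}$ rather than the top eigenvalue of its symmetric part $\tilde\Xi$ --- is real, and the paper's proof has exactly the same issue: \cite[Prop.~25.9]{bauschke2011convex} needs a Lipschitz constant, not merely the upper inner-product bound, yet the paper defines $\tilde L=\lambda_{\max}(\tilde\Xi)$. So your concern is well-placed but is a gap in the paper's argument too, not a divergence from it. Second, your disturbance term $\alpha\|W^{-1/2}M^TQ\|\,\|d_k\|$ does not match the theorem's $\gamma=\|W^{1/2}M^TQ\|/(1-\eta)$; the paper obtains the latter by writing the perturbation inside the projection as $M^TQd_k$ rather than $-\alpha W^{-1}M^TQd_k$, which is how the stated $\gamma$ arises. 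If you want to reproduce the statement verbatim you should follow that bookkeeping.
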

\begin{proof}
Constraint satisfaction follows immediately from the projection operation in \eqref{eq:analysis_mapping} and Lemma~\ref{lmm:cstr_tightening}. 

Next we demonstrate ISS, define $A = I - \alpha W^{-1} \tilde F$. The mapping $W^{-1} \tilde F$ is affine and thus satisfies
\begin{equation*}
	\tilde\mu \|x-y\|_W^2 \leq \la W^{-1} (\tilde F(x) -\tilde F(y)), x-y\ra_W \leq \tilde L \|x-y\|_W^2.
\end{equation*}
Further as $M\in \textrm{co}~\{G_i~i\in \ints_{[1,N]}\}$, by Lemma~\ref{lmm:mu_and_L} we have that $[\tilde \mu,\tilde L]\subseteq [\mu,L]$ and thus $\tilde \mu > 0$ due to Assumption~\ref{ass:convergence-assumptions}. Hence, by \cite[Prop 25.9]{bauschke2011convex}, $A$ satisfies $\|A(x)-A(y)\|_W \leq \eta \|x-y\|_W$, with $\eta^2 = 1-2\alpha \tilde \mu + \alpha^2 \tilde L^2$, and $\eta < 1$ if $\alpha \in (0,2\tilde\mu/\tilde L^2)$.

Combining $\eta^2 = 1-2\alpha \tilde \mu + \alpha^2 \tilde L^2$ with $\alpha = \tilde \mu/\tilde L^2(1-\epsilon)$ and simplifying yields that $\eta^2 = 1 - (\tilde \mu/\tilde L)^2(1-\epsilon^2)$. Since $\alpha \in (0,2\mu/L^2)$ and $[\tilde \mu,\tilde L]\subseteq [\mu,L]$ we have that $\epsilon \in (-1,1)$ which in turn implies that $\eta\in (0,1)$.


Next, recall that $\bar u = \Pi_\mc{X}^W\circ A(\bar u)$ and thus
\begin{align}
\|u_{k+1} - \bar u\|_W &= \|\Pi_\mc{X}^W [A(u_k) + M^TQ d_k] - \bar u\|_W \nonumber\\
&= \|\Pi_\mc{X}^W [A(u_k) + M^T Qd_k] - \Pi_\mc{X}^W [A(\bar u)]\|_W\nonumber\\
&\leq \|A(u_k) - A(\bar u) + M^T Q d_k\|_W\nonumber\\
&\leq \eta \|u_k - \bar u\|_W + \|W^{\frac12} M^TQ\| \|d_k\|, \label{eq:ISS}
\end{align}
where we used non-expansiveness of the projection in the norm induced by the weighting matrix \cite[Theorem 6.42]{beck2017first} in the first inequality. Iterating \eqref{eq:ISS} yields that
\begin{align*}
	\|u_k-\bar u\|_W &\leq \eta^k \|u_0 - \bar u\|_W + \textstyle \sum_{j=0}^k \eta^{k-j} \|W^
\frac12 M^T Q\|\|d_j\|\\
	&\leq \eta^k  \|u_0 - \bar u\|_W +  \|W^
\frac12 M^T Q\| \sum_{k=0}^\infty \eta^k \sup_{k\geq 0}\|d_k\|.
\end{align*}
Since $\eta \in (0,1)$, the geometric series converges and $\|u_k-\bar u\|_W \leq \eta^k \|u_0 - \bar u\|_W + \gamma \textstyle \sup_{k\geq 0}\|d_k\|$ where $\gamma = \|W^\frac12 M^TQ\|/(1-\eta)$ as claimed.
\end{proof}

Theorem~\ref{eq:thm_convergence} gives sufficient conditions under which \eqref{eq:closed-loop} is stable about $\bar u$. We now proceed to bound the distance between the fixed point $\bar u$ of \eqref{eq:analysis_mapping} and
\begin{equation} \label{eq:opt_compact_modified}
u^* = \textstyle \argmin_{u\in \mc{X}} \phi(u) = \argmin_{u\in \mc{X}} \frac12 u^T H u + f^T u
\end{equation}
as an intermediate step in obtaining a bound on the sequence of cost function values $\phi(u_k)$.

\begin{lmm} \label{lmm:distance}
Let Assumptions~\ref{ass:uncertainty}-\ref{ass:convergence-assumptions} hold. Then $\|\bar u - u^*\|_W \leq \delta$ where $\delta = \|\tilde H\|_W \|(G-M)^T Q(Gu^* + w -r)\|_W$.
\end{lmm}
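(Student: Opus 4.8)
The plan is to treat $u^*$ and $\bar u$ as solutions of two variational inequalities over the \emph{same} feasible set $\mc{X}$, differing only in their first-order operator, and to bound the distance between them by a perturbation argument driven by the model mismatch $G-M$. Recall that $\bar u$ satisfies $\tilde F(\bar u) + \mc{N}_\mc{X}(\bar u)\ni 0$ from \eqref{eq:analysis_VI}, whereas the minimizer $u^*$ of \eqref{eq:opt_compact_modified} is characterized by the true optimality condition $\nabla\phi(u^*) + \mc{N}_\mc{X}(u^*)\ni 0$, with $\nabla\phi(u) = Hu+f$.

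The key preparatory step is an algebraic identity that exposes the mismatch as a constant perturbation. Using $H-\tilde H = (G-M)^T Q G$ and $f-\tilde f = (G-M)^T Q(w-r)$, one obtains
\begin{equation*}
\nabla\phi(u^*) - \tilde F(u^*) = (G-M)^T Q(Gu^* + w - r) =: e,
\end{equation*}
which is precisely the vector whose $W$-norm enters $\delta$. Hence $u^*$ solves the \emph{perturbed} inclusion $\tilde F(u^*) + \mc{N}_\mc{X}(u^*)\ni -e$, i.e.\ the same operator $\tilde F$ that defines $\bar u$ but with constant forcing $-e$. This also gives the sanity check that $M=G$ forces $e=0$ and therefore $\bar u = u^*$.

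Next I would pair the two inclusions using monotonicity of the normal cone together with strong monotonicity of $\tilde F$. Since $-\tilde F(\bar u)\in \mc{N}_\mc{X}(\bar u)$ and $-\tilde F(u^*)-e\in \mc{N}_\mc{X}(u^*)$, monotonicity of $\mc{N}_\mc{X}$ yields $\la \bar u - u^*,\, -\tilde F(\bar u)+\tilde F(u^*)+e\ra \geq 0$. As $\tilde F$ is affine with $\tilde F(\bar u)-\tilde F(u^*) = \tilde H(\bar u - u^*)$, this rearranges to $\la \bar u - u^*, \tilde H(\bar u - u^*)\ra \leq \la \bar u - u^*, e\ra$. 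Combining the lower bound $\la z, \tilde H z\ra \geq \tilde\mu\|z\|_W^2$ established in the proof of Theorem~\ref{eq:thm_convergence} (via Lemma~\ref{lmm:mu_and_L} and Assumption~\ref{ass:convergence-assumptions}) with a Cauchy--Schwarz estimate on $\la \bar u - u^*, e\ra$ then isolates $\|\bar u - u^*\|_W$ and produces a bound proportional to $\|e\|$.

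The main obstacle I anticipate lies in the bookkeeping of the final step: converting cleanly between the standard inner product used in the VI pairing and the $W$-weighted norm in which the result is stated, and identifying the resulting proportionality constant with $\|\tilde H\|_W$ as displayed in $\delta$, i.e.\ controlling the amplification of the mismatch by the conditioning of $\tilde H$ in the $W$-metric. By contrast, the cancellation of the normal-cone contributions is handled automatically by monotonicity of $\mc{N}_\mc{X}$ rather than by assuming a common active set at $\bar u$ and $u^*$, so no regularity of the active constraints is required.
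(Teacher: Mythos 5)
Your reduction coincides exactly with the paper's: both proofs recast the optimality condition for $u^*$ as the $\tilde F$-inclusion perturbed by the same mismatch vector $e = (H-\tilde H)u^* + f - \tilde f = (G-M)^TQ(Gu^*+w-r)$. The paper phrases this as $\bar u = \mc{A}(-\tilde f)$ and $u^* = \mc{A}((\tilde H - H)u^* - f)$ with $\mc{A} = (\tilde H + \mc{N}_\mc{X})^{-1}$, and the difference of the two arguments of $\mc{A}$ is precisely your $-e$. Where you diverge is the final quantitative step: the paper does not pair the inclusions at all; it simply invokes Lipschitz continuity of $\mc{A}$ (citing Dontchev--Rockafellar, Thm.\ 2F.6) with Lipschitz constant $\|\tilde H\|_W$ and concludes in one line, whereas your monotone pairing, carried to completion, gives $\tilde\mu\|\bar u-u^*\|_W^2 \leq \la \bar u - u^*, e\ra = \la \bar u-u^*, W^{-1}e\ra_W \leq \|\bar u-u^*\|_W\,\|W^{-1}e\|_W$, i.e.\ $\|\bar u-u^*\|_W \leq \tilde\mu^{-1}\|W^{-1}e\|_W$.

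This is where the genuine gap sits, and the step you deferred as ``bookkeeping'' is not bookkeeping. Your bound carries the constant $\tilde\mu^{-1}$ against the $W^{-1}$-weighted norm of $e$ (equivalently $\|W^{-\frac12}e\|$), while the lemma claims $\|\tilde H\|_W$ against the $W$-weighted norm ($\|W^{\frac12}e\|$); these are genuinely different quantities, and no norm conversion turns one into the other without inserting extra factors (e.g.\ $\|W^{-1}e\|_W \leq \lambda_{\min}(W)^{-1}\|e\|_W$ yields $\tilde\mu^{-1}\lambda_{\min}(W)^{-1}\|e\|_W$, which neither dominates nor is dominated by $\|\tilde H\|_W\|e\|_W$ in general). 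So as written your argument establishes a correct perturbation bound of the claimed \emph{form}, but not the lemma with the stated $\delta$. It is worth noting that your constant is the one strong monotonicity actually supports: the theorem the paper cites gives the solution map a Lipschitz modulus equal to the reciprocal of the strong-monotonicity constant, and in the unconstrained case $\mc{A} = \tilde H^{-1}$, whose $W$-induced Lipschitz constant is $\|\tilde H^{-1}\|_W$, not $\|\tilde H\|_W$. Your derivation therefore effectively flags that the paper's displayed constant is the suspect ingredient; but to prove the statement verbatim you would have to justify $\|\tilde H\|_W$ as a Lipschitz constant for $\mc{A}$, which your route does not (and arguably should not) produce.
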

\begin{proof}
Since $\mu > 0$ by Assumption~\ref{ass:convergence-assumptions}, $\tilde H$ is strongly monotone, $\mc{A} = (\tilde H + \mc{N}_\mc{X})^{-1}$ is a Lipschitz continuous function \cite[Theorem 2F.6]{dontchev2009implicit} and $\bar u = \mc{A}(-\tilde f)$. Thus the vector $u^*$ satisfies
\begin{alignat*}{2}
	& && H u^* + f + \mc{N}_\mc{X}(u^*) \ni 0\\
	& \Leftrightarrow &&H u^* + \tilde H u^* - \tilde Hu^* + f + \mc{N}_\mc{X}(u^*)\ni 0\\
	&\Leftrightarrow u^* &&= (\tilde H + \mc{N}_\mc{X})^{-1}((\tilde H-H)u^* - f)\\
	& &&= \mc{A}((\tilde H-H)u^* - f).
\end{alignat*}
Thus, using Lipschitz continuity of $\mc{A}$,
\begin{align*}
\|\bar u - u^*\|_W &= \|\mc{A}(-\tilde f) - \mc{A}((\tilde H-H)u^* -f)\|_W\\
&\leq \|\tilde H\|_W \|(H-\tilde H)u^* + f - \tilde f\|_W\\
& =  \|\tilde H\|_W \|(G-M)^TQ(Gu^* + w -r)\|_W
\end{align*}
as claimed, where $ \|\tilde H\|_W$ is a Lipschitz constant for $\mc{A}$.
\end{proof} 
\noindent Finally, we bound the sequence of cost function values $\phi(u_k)$.
\begin{thm} \label{thm:main_result}
Let Assumptions~\ref{ass:uncertainty}-\ref{ass:convergence-assumptions} hold. Then $u_k\in \mc{U}$, $y_k \in \mc{Y}$ for all $k\geq 0$ and there exist $\beta_1 > 0$ and $\beta_2 > 0$ such that
\begin{subequations}
\begin{align}
	\|u_k - u^*\|_W &\leq \eta^k \|u_0 - \bar u\|_W + \gamma~\sup_{k\geq 0}\|d_k\| + \delta, \label{eq:final1}\\
	\phi(u) - \phi(u^*) &\leq \beta_1 \|u-u^*\|_W + \beta_2\|u - u^*\|_W^2
\end{align}
\end{subequations}
where $\eta \in (0,1)$, $\gamma > 0$ and $\delta \geq 0$ are defined in Theorem~\ref{eq:thm_convergence} and Lemma~\ref{lmm:distance}, and $\bar  u$ and $u^*$ are defined in \eqref{eq:analysis_VI} and \eqref{eq:opt_compact_modified} respectively.
\end{thm}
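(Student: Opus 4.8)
The plan is to prove the two inequalities separately, since they are essentially independent. The first inequality \eqref{eq:final1} follows almost immediately by combining the two results I am allowed to assume. Theorem~\ref{eq:thm_convergence} already gives the ISS-type bound $\|u_k - \bar u\|_W \leq \eta^k \|u_0 - \bar u\|_W + \gamma \sup_{k\geq 0}\|d_k\|$, while Lemma~\ref{lmm:distance} gives $\|\bar u - u^*\|_W \leq \delta$. So first I would simply insert $\bar u$ and apply the triangle inequality in the $W$-norm:
\begin{equation*}
\|u_k - u^*\|_W \leq \|u_k - \bar u\|_W + \|\bar u - u^*\|_W,
\end{equation*}
and substitute the two bounds to recover \eqref{eq:final1}. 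Constraint satisfaction ($u_k \in \mc{U}$, $y_k \in \mc{Y}$) is inherited verbatim from Theorem~\ref{eq:thm_convergence}, so nothing new is needed there.

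The second inequality is the substantive part. The aim is to bound the suboptimality gap $\phi(u) - \phi(u^*)$ in terms of $\|u - u^*\|_W$, for an arbitrary $u$ (which will then be specialized to the iterates $u_k$). Since $\phi(u) = \tfrac12 u^T H u + f^T u$ is quadratic with Hessian $H$, I would Taylor-expand exactly about $u^*$:
\begin{equation*}
\phi(u) - \phi(u^*) = \nabla\phi(u^*)^T (u - u^*) + \tfrac12 (u-u^*)^T H (u - u^*).
\end{equation*}
The first-order term is controlled by Cauchy--Schwarz in the $W$-norm, giving a contribution bounded by $\|\nabla\phi(u^*)\|_{W^{-1}} \|u-u^*\|_W$; this produces the $\beta_1 \|u-u^*\|_W$ term, with $\beta_1 = \|\nabla\phi(u^*)\|_{W^{-1}} = \|Hu^* + f\|_{W^{-1}}$. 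The quadratic term is bounded by $\tfrac12 \|H\|_W \|u-u^*\|_W^2$ using the induced-norm inequality, which yields the $\beta_2 \|u-u^*\|_W^2$ term with $\beta_2 = \tfrac12 \|H\|_W$. Factoring out one power of $\|u-u^*\|_W$ gives precisely the claimed form $(\beta_1 + \beta_2\|u-u^*\|_W)\|u-u^*\|_W$.

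The only delicate point is the first-order term: one might hope $\nabla\phi(u^*)^T(u-u^*) \leq 0$ from the optimality variational inequality for $u^*$ over $\mc{X}$ (namely $\nabla\phi(u^*) + \mc{N}_\mc{X}(u^*) \ni 0$, which gives $\nabla\phi(u^*)^T(v - u^*) \geq 0$ for all $v \in \mc{X}$, hence a useful sign only in one direction). However, since the theorem wants an upper bound valid for general $u$ and the iterates need not satisfy any particular relation making this term vanish, I would not rely on the sign and would instead keep the crude Cauchy--Schwarz bound, absorbing $\|\nabla\phi(u^*)\|_{W^{-1}}$ into $\beta_1$. I expect the main obstacle to be purely bookkeeping: confirming that $\beta_1 > 0$ strictly (which holds whenever $\nabla\phi(u^*) \neq 0$, i.e. when $u^*$ lies on the boundary of $\mc{X}$; if $\nabla\phi(u^*)=0$ the term simply drops and any $\beta_1 > 0$ works) and that the induced $W$-norms are the correct dual pairing so the constants are genuinely computable from problem data.
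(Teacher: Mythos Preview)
Your proposal is correct and essentially coincides with the paper's proof. The first inequality is handled identically (triangle inequality combining Theorem~\ref{eq:thm_convergence} and Lemma~\ref{lmm:distance}), and for the second the paper invokes Lipschitz continuity of $\nabla\phi$ (the descent lemma) where you use the exact second-order Taylor expansion---for a quadratic $\phi$ these are the same inequality, and the paper's constants $\beta_1 = \|W^{-1/2}\nabla\phi(u^*)\|$ and $\beta_2 = \bar L/2$ (with $\bar L$ the largest eigenvalue of $W^{-1/2}HW^{-1/2}$) match yours once you bound the quadratic form $(u-u^*)^T H(u-u^*)$ via $W^{-1/2}HW^{-1/2}$ rather than the induced norm $\|H\|_W$.
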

\begin{proof}
Combining Theorems~\ref{eq:thm_convergence} and Lemma \ref{lmm:distance} with the triangle inequality yields \eqref{eq:final1} immediately. Moreover, since $\nabla \phi(u) = Hu + f$ is Lipschitz, we have that
\begin{align*}
	\phi(u^*) - \phi(u)& \leq \la W^{-\frac12} \nabla\phi(u^*), W^{\frac12} (u^* - u)\ra +\frac{\bar L}{2}\|u-u^*\|_W^2 \\
	& \leq \|W^{-\frac12} \nabla\phi(u^*)\| \|u^* - u\|_W + \frac{\bar L}{2}\|u-u^*\|_W^2
\end{align*}
where $\bar L$ is the largest Eigenvalue of $W^{-\frac12}HW^{-\frac12}$. Letting $\beta_1 = \|W^{-\frac12} \nabla\phi(u^*)\|$ and $\beta_2 = \bar L/2$ completes the proof.
\end{proof}

\begin{rmk} \label{rmk:convergence_rate}
The convergence rate $\eta$ in Theorem~\ref{eq:thm_convergence} depends on the conditioning of $\tilde \Xi =0.5 W^{-\frac12}(\tilde H+\tilde H^T) W^{-\frac12}$. Since $\tilde H = R + M^T Q G$ and $W = R + M^T Q M$ we see that $\eta$ decreases as the model improves and $M$ approaches $G$. Similarly, we see from Lemma~\ref{lmm:distance} that the offset term $\delta$ also shrinks as the model improves. The choice of step length has an impact as well, picking $\alpha = \alpha^* = \tilde\mu/\tilde L^2$ minimizes $\eta$. However $\tilde\mu$ and $\tilde L$ are unknown and $\epsilon = \tilde L^2 /\tilde \mu(\alpha^*  - \alpha)$ quantifies the loss of performance due to suboptimal selection of $\alpha$.
\end{rmk}








\section{Numerical Example} \label{ss:simulations}
We illustrate the potential of our OB-ILC methodology with an application to feedforward control of a high-precision two-axis robotic motion stage. The control architecture for each axis is shown in Figure~\ref{fig:block-diagram}. The objective is to track a reference trajectory $\bar p$ by driving $e$ to zero, while satisfying constraints on the speed $v$ and voltage input $V$.

\begin{figure}[htbp]
	\centering
	\includegraphics[width=0.95\columnwidth]{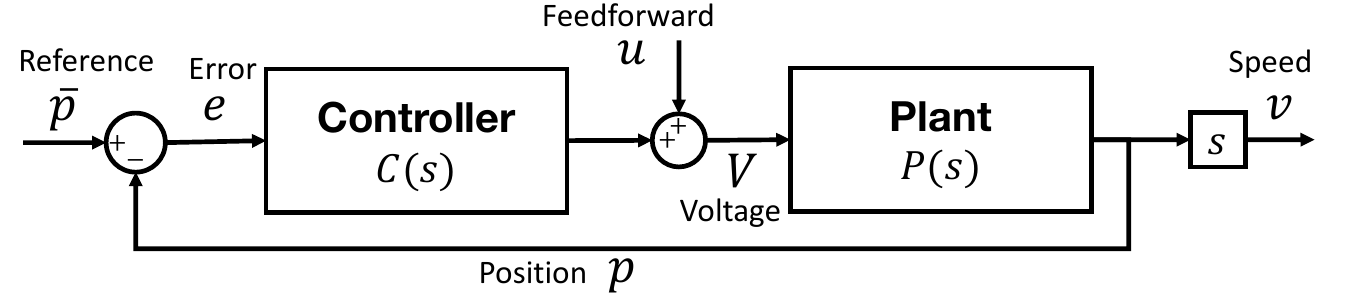}
	\caption{The control architecture of the precision motion stage. The error dynamics of each axis is stabilized with a feedback controller to track the reference $\bar p$ while ILC is used to generate a feedforward input $u$.}
	\label{fig:block-diagram}
\end{figure}

The IO map can be expressed using the transfer functions 
\begin{subequations} \label{eq:system_model}
\begin{align}
e_i(s) &= \frac{-P_i}{1 + P_i C_i}u_i(s) + \frac{1}{1 + P_i C_i} \bar p_i(s) \\
v_i(s) &= \frac{s P_i}{1 + P_i C_i}u_i(s) + \frac{s P_i C_i}{1 + P_i C_i} \bar p_i(s) \\
V_i(s) &= \frac{1}{1 + P_i C_i}u_i(s)  + \frac{C_i}{1 + P_i C_i} \bar p_i(s)
\end{align}
\end{subequations}
for axes $i \in \{x,y\}$ where $P_i(s)$ and $C_i(s)$ model a robotic deposition machine and are taken from \cite[Equations (28)-(31)]{bristow2006high} which identifies them from experimental data. The system model \eqref{eq:system_model} can be written in the compact form
\begin{equation}
	y(s) = G(s)u(s) + H(s)\bar{p}(s),
\end{equation}
with $y^T = [e_x~v_x~V_x~e_y~v_y~V_y]$, $u^T = [u_x~u_y]$, $\bar p^T = [\bar p_x~\bar p_y]$.

We obtain a finite-dimensional IO mapping by taking $N = 400$ evenly spaced samples of the trajectories generated by \eqref{eq:system_model} resulting in a sampling period of $2$ ms, $m = 2406$, and $n= 802$. The matrix representation $G$ of \eqref{eq:system_model} is an exact first-order hold discretization of a continuous time state space realization of $G(s)$. The constant term $w$ is the sampled time-response of $H(s)$ to the reference trajectory $\bar p$ in Figure~\ref{fig:modelling_error}.

\begin{figure}[htbp]
	\centering
	\includegraphics[width=0.95\columnwidth, trim=0cm 0cm 0cm 0cm, clip=true]{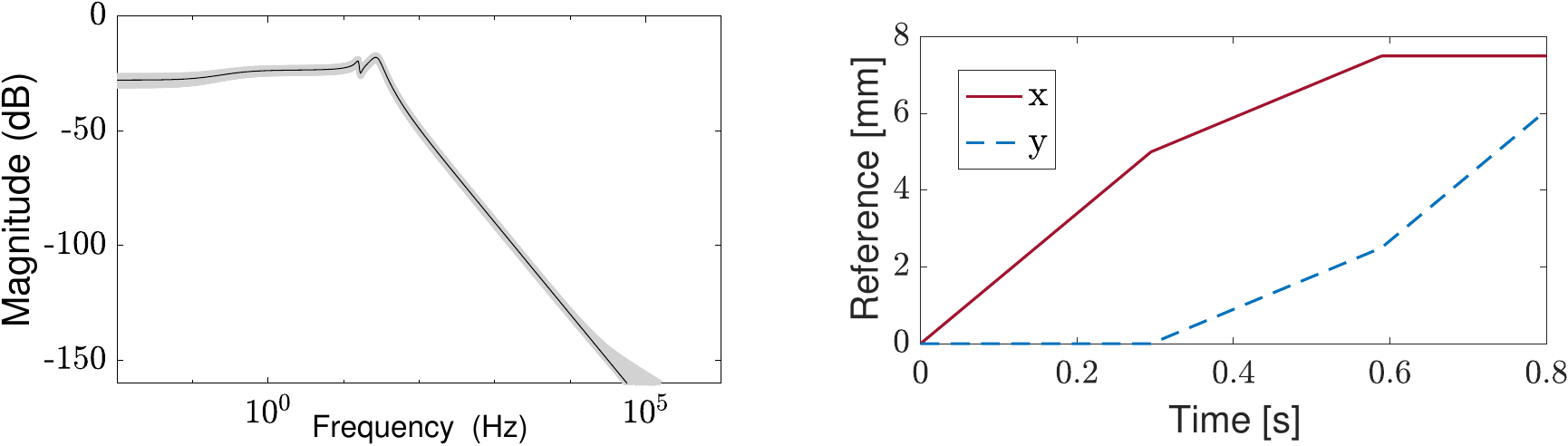}
	\caption{(Left) The frequency response of $G_{11}(s)$ including the true system (black) and the convex hull uncertainty representation (grey shading). (Right) The reference trajectory $\bar p$ used in this paper.}
	\label{fig:modelling_error}
\end{figure}

The convex hull uncertainty representation $\Theta$ in Assumption~\ref{ass:uncertainty} is generated by discretizing the four corners of
\begin{equation}
	G(s) = (1 \pm 0.1) G_s(s) + (1 \pm 0.15) G_f(s)
\end{equation}
where $G_f(s) + G_s(s) = G(s)$ is a fast/slow decomposition of $G(s)$ around $10$ rad/s obtained using \texttt{freqsep} in MATLAB. The resulting uncertain $G_{11}(s)$ is illustrated in Figure~\ref{fig:modelling_error}. The same procedure applied to $H(s)$ is used to produce $w_i$.

\begin{figure}[htbp]
	\centering
	\includegraphics[width=0.95\columnwidth, trim=0cm 0cm 0cm 0cm, clip=true]{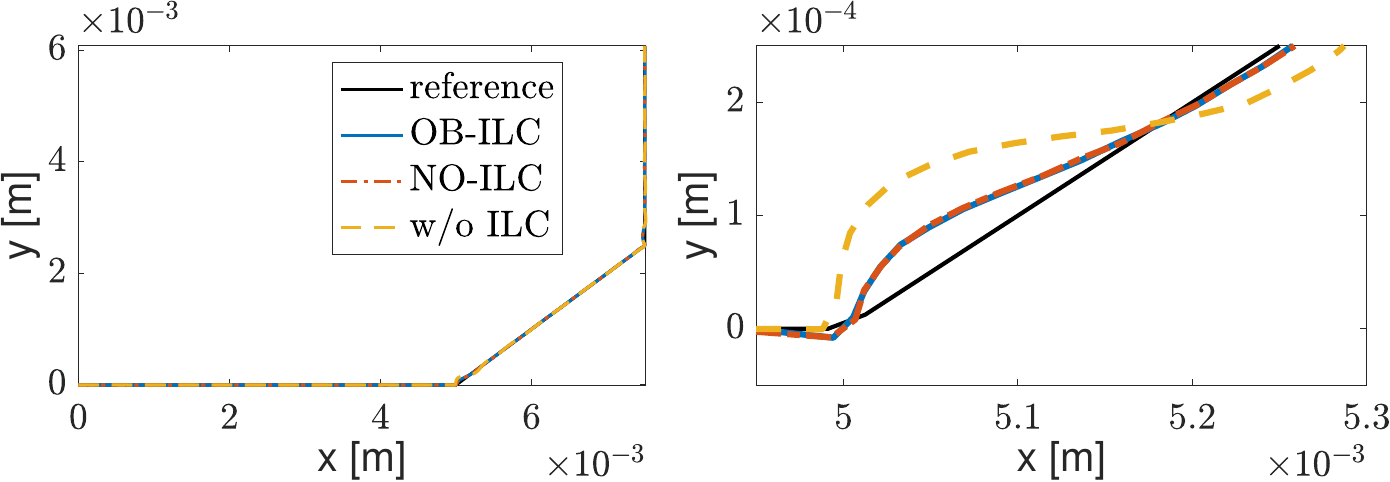}
	\caption{The inputs computed by bothfully converged} ILC algorithms are able to significantly reduce the tracking error.
	\label{fig:planar}
\end{figure}

\begin{figure}[htbp]
	\centering
	\includegraphics[width=0.95\columnwidth, trim=0cm 0cm 0cm 0cm, clip=true]{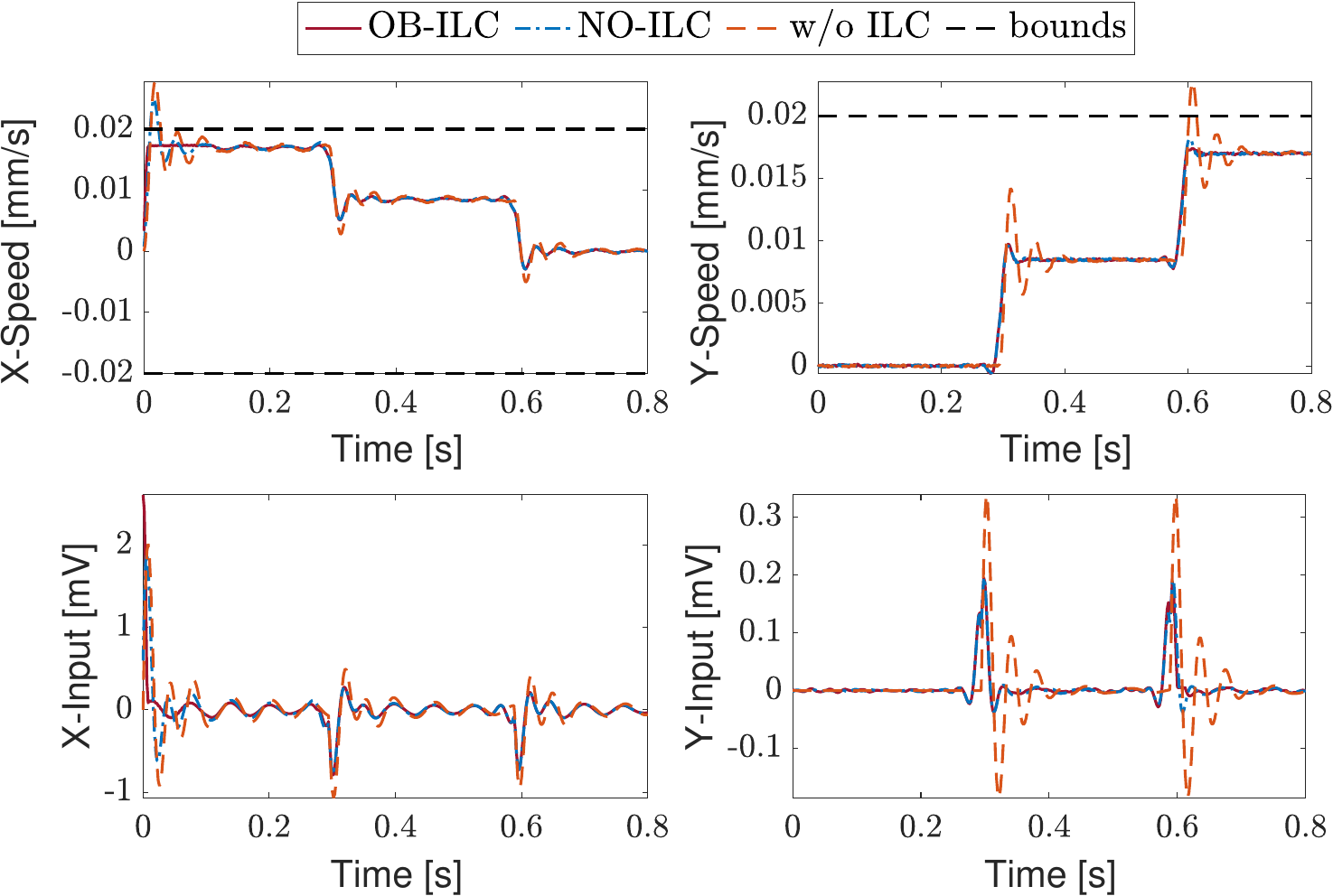}
	\caption{Our algorithm (OB-ILC) robustly satisfies the constraints while norm-optimal ILC (NO-ILC) and the plant without ILC feedforward (w/o ILC) violate the speed constraint. Both ILC algorithms have converged.}
	\label{fig:timeseries}
\end{figure}

The optimization problem is to minimize a weighted combination of the error and the control effort subject to constraints on the maximum velocity and input. The cost matrices are $Q = I_{N+1} \otimes 
\begin{bmatrix} 
	 	100 I_2 & 0\\0 & 0
	 	\end{bmatrix}$ 
and $R = I_{2(N+1)}$ where $\otimes$ is the Kronecker product, the constraints are $\mc{Y} = \{(e,v,V)~|~\|v\|_\infty \leq 0.02~\mathrm{mm/s},~ \|V\|_\infty \leq 0.1~\mathrm{V}\}$ and $\mc{U} = \reals^n$. The disturbance set is $\mc{D} = \{(e,v,V)~|~\|e\|_\infty \leq 10^{-3}~\mathrm{mm}, \|v\|_\infty \leq 10^{-4}~\mathrm{mm/s}\}$. The model $M$ is chosen randomly from $\Theta$ and in all cases the step length is $\alpha = \mu/L^2$. Our MATLAB 2020b implementation using \texttt{quadprog}'s active set algorithm running on a 2020 Macbook Pro (8-Core Apple M1) took $1.57\pm0.8$ s, with a maximum of $2.93$ s to execute. 

We compare our proposed OB-ILC method with the norm-optimal ILC (NO-ILC) controller proposed in \cite{barton2010norm}. The NO-ILC controller uses the same weighting matrices as OB-ILC and coincides with our method if $\alpha = 1$ and the constraints are removed. Both methods are able to significantly reduce the tracking error as illustrated in Figure~\ref{fig:planar}. Figure~\ref{fig:timeseries} illustrates that OB-ILC is able to enforce output constraints at the fixed point while NO-ILC cannot. This is fundamental for practical applications. 

\begin{figure}[htbp]
	\centering
	\includegraphics[width=0.95\columnwidth]{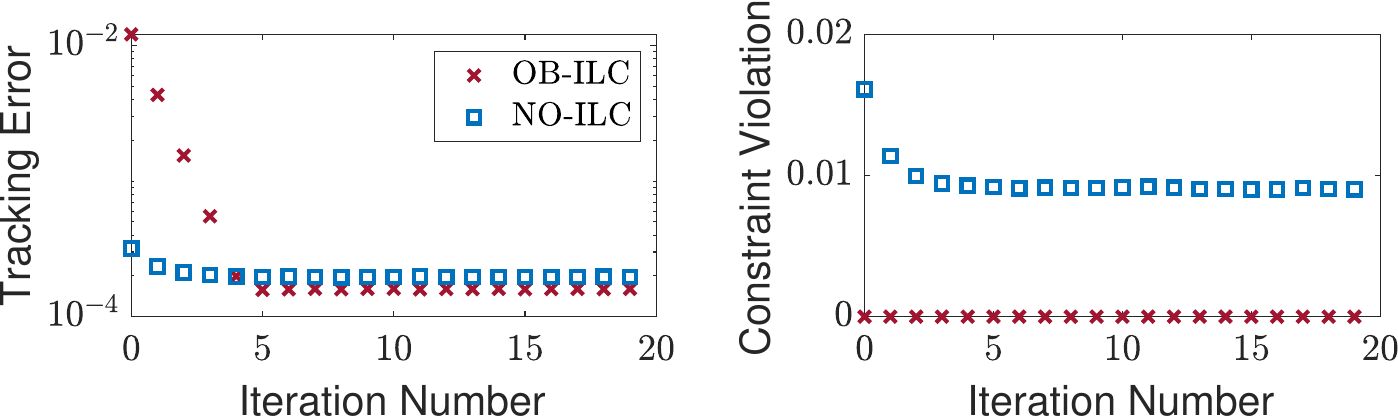}	\caption{OB-ILC converges more slowly than NO-ILC, but achieves a lower asymptotic tracking error and satisfies output constraints. Tracking error and constraint violation are quantified using the metrics $\|e\|_2$ and $\|y - \Pi_\mc{Y}(y)\|_2$.}
	\label{fig:convergence}
\end{figure}

Figure~\ref{fig:convergence} shows that OB-ILC converges more slowly than NO-ILC, but attains a lower asymptotic tracking error and satisfies the output constraints throughout. The lower cost attained by the OB-ILC is explained by the conservatism introduced through the constraint tightening, which forces the controller to be less aggressive and more robust in the presence of model mismatch especially when the constraints are active (see Fig.~\ref{fig:timeseries}). The slower convergence is expected due to the step size restriction (Assumption~\ref{ass:convergence-assumptions}.(ii)) used to ensure robust stability rather than through the selection of the  matrices $Q$ and $R$ as is common in the literature. A similar ILC approach with adaptive step-size was presented in \cite{son2015robust} but constraints were not considered in that work.



Finally, we investigate the sensitivity of the algorithm to modelling error and disturbances by varying the quality of the model $M$ for a fixed $\mc{D}$ (Figure~\ref{fig:sens_combined}, left) and by varying $\mathrm{rad}~\mc{D}$ for a fixed $M$ (Figure~\ref{fig:sens_combined}, right).
\begin{figure}[htbp]
	\centering
	\includegraphics[width=0.9\columnwidth, trim=0cm 0cm 0cm 0cm, clip=true]{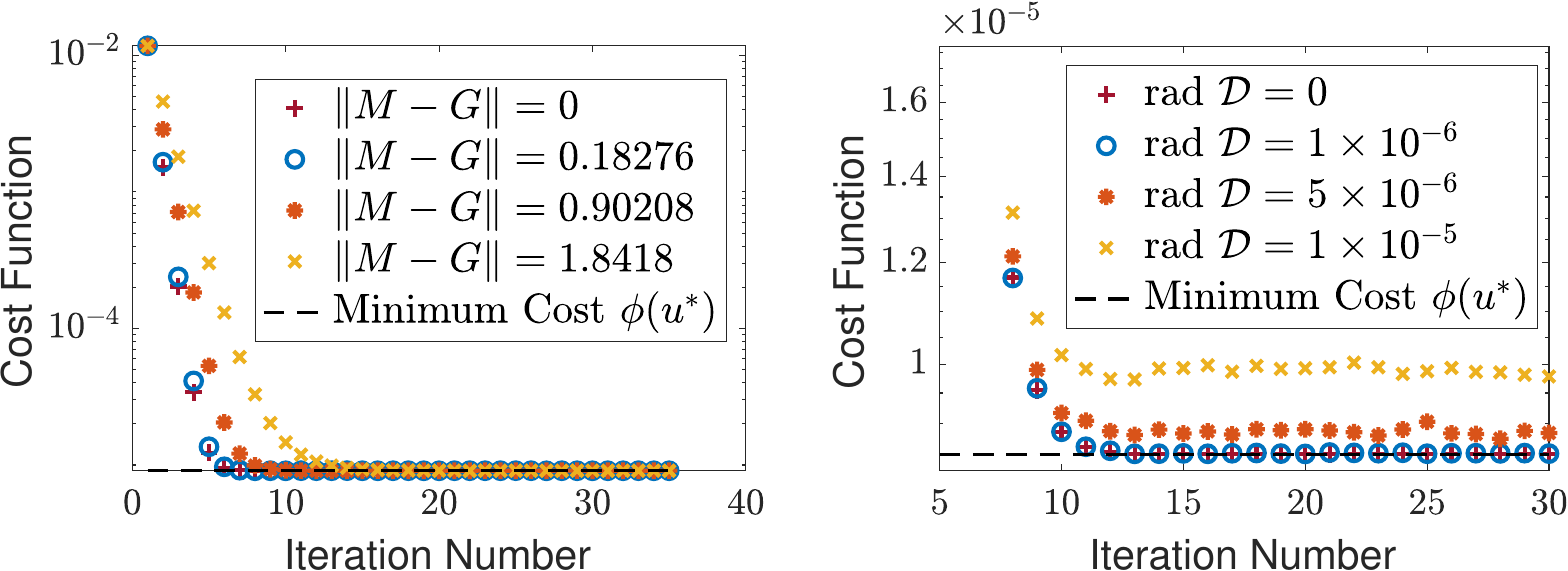}
	\caption{The algorithm converges more slowly as the modelling error increases (left) as predicted in Remark~\ref{rmk:convergence_rate}. The cost function decreases until it reaches a threshold which increases with the noise as predicted by Theorem~\ref{thm:main_result} (right).}
	\label{fig:sens_combined}
\end{figure}
\section{Conclusions}
This paper presented a robust ILC policy for state and input constrained systems in the presence of process noise and model mismatch along with a rigorous iteration domain stability analysis. It illustrates how monotone operator and robust control concepts can be combined to design and analyze ILC policies. Future work includes incorporating sensitivity and model learning to reduce conservatism and accelerate convergence.

\bibliography{opt_ilc}
\end{document}